 \let\MYoriglatexcaption\caption
 \renewcommand{\caption}[2][\relax]{\MYoriglatexcaption[#2]{#2}}
\newtheorem{theorem}{Theorem}
\newtheorem{proposition}{Proposition}
\newtheorem{remark}{Remark}
\newtheorem{definition}{Definition}
\newtheorem{lemma}{Lemma}
\renewcommand{\vec}[1]{\mathbf{#1}}
\begin{document}

\title{Secure Degrees of Freedom of the MIMO Multiple Access Channel with Multiple unknown Eavesdroppers}


\author{\IEEEauthorblockN{Mohamed Amir, Tamer Khattab, Tarek Elfouly, Amr Mohamed}
\IEEEauthorblockA{Qatar University\\
Email: mohamed.amir@qu.edu.qa, tkhattab@ieee.org, tarekfouly@qu.edu.qa, amrm@ieee.org \vspace{-5mm}}
\thanks{This research was made possible by NPRP 5-559-2-227 grant from the Qatar National Research Fund (a member of The Qatar Foundation). The statements made herein are solely the responsibility of the authors.}}

\maketitle
\thispagestyle{empty}
\pagestyle{empty}

\begin{abstract}

We investigate the secure degrees of freedom (SDoF) of a two-transmitter Gaussian multiple access  channel with multiple antennas at the transmitters, the legitimate receiver with the existence of an unknown number of eavesdroppers each with a number of antennas less than or equal to a known value $N_E$. The channel matrices between the legitimate transmitters and the receiver are available everywhere, while the legitimate pair does not know the eavesdroppers' channels matrices. We provide the exact sum SDoF for the considered system. A new comprehensive upperbound is deduced and a new achievable scheme based on utilizing jamming is exploited. We prove that Cooperative Jamming is SDoF optimal even without the instantaneous eavesdropper CSI available at the transmitters.   

\end{abstract}

\section{Introduction}
The noisy wiretap channel was first studied by Wyner \cite{wyner}, in which a
legitimate transmitter (Alice) wishes to send a message to a legitimate receiver (Bob), and hide it from an eavesdropper (Eve). Wyner proved that Alice can send positive secure rate using channel coding. He derived capacity-equivocation region for the degraded
wiretap channel. Later, Csiszar and Korner found capacity-equivocation the region for the
general wiretap channel \cite{csi}, which was
extended to the Gaussian wiretap channel by Leung-Yan-
Cheong and Hellman \cite{leu}.

A significant amount of work was carried thereafter to study the information theoretic physical layer security for different network models. The relay assisted wiretap channel was studied in \cite{secop}. The secure degrees of freedom (SDoF) region of multiple access channel (MAC) was presented in \cite{sennur_mac}. The SDoF is the the pre-log of the secrecy capacity region in the high-SNR regime. 
Using MIMO systems for securing the message was an intuitive extension due to the spatial gain provided by multiple antennas.  The MIMO wiretap channel was studied in \cite{mimo_wire,mimo_secure,mimo_note,mimo_hass,yener_coop,mimo_shafie,mimo_confidential} and the secrecy capacity was identified in \cite{mimo_hass}.
  
Meanwhile, the idea of cooperative jamming was proposed in \cite{yener_coop},
 where some of the users transmit independent and identically distributed (i.i.d.)
Gaussian noise towards the eavesdropper to improve the sum secrecy rate. Cooperative jamming was used for deriving the SDoF for different networks. In \cite{sennur_mac}, cooperative jamming was used to jam the eavesdropper and proved that the K-user MAC with single antenna nodes can achieve $\frac{K(K-1)}{K(K-1)+1}$ SDoF.

In this paper, we study the MIMO MAC channel with
unknown number of eavesdroppers where the eavesdroppers'
channel coefficients are not available at the legitimate transmitters
and receiver, but only second order statistics of the
eavesdropper channel is available. The motivation emanates
from the fact that the passive eavesdroppers instantaneous and
exact CSI is hard to obtain because the eavesdropper is passive.
It is essential to determine the Secure DoF under other different
forms of the CSI availability, e.g statistical, alternating or
no CSI. It is easy to prove that if the eavesdropper channel
is completely unknown, then the secrecy capacity is equal to
zero. The secrecy capacity of the MIMO wiretap channel with
known CSI is an upperbound for these channel given equal
number of antennas at both transmitting and receiving sides,
which is in turn upperbounded as
\begin{equation}
I(W_t ; Y ) - I(W_t;Z)
\end{equation}
where $W_t$ is the set of messages to be transmitted, $Y$ is
the legitimate receiver signal and $Z$ is the eavesdropper
signal. So for identifying the secrecy capacity and building a
positive rate secure code for the no CSI case, it must be proven first that
$I(W_t;Z) < I(W_t; Y )$ which is not possible if the channel is
completely unknown.

The major contributions of our work as compared to existing literature can be summarized as follows:
\begin{itemize}
 \item We present the sum SDoF of the multiple access channel with unknown fading eavesdroppers channels.
\item We present the sum SDoF of the multiple access channel with known eavesdropper channels with constant or time varying channels, closing an open problem since the best  known achievable region was presented in \cite{khan}. we show that it has the same sum SDoF as the previous unknown fading channels case.
\item our work incorporates the more general scenario of multiple eavesdroppers.
 \item We study the more comprehensive case where all the eigenvalues of the legitimate channel has non-zero values\footnote{The cases where some of the eigenvalues are equal to zero represent special degraded cases of the more general non-zero eigenvalues case, where the SDoF decreases for every zero eigenvalue till it collapses to the trivial case of zero SDoF for all-zero eigenvalues.}.
 \item we deduce a new upperbound on the SDoF and provide an optimal scheme that achieves the new upperbound.  
\end{itemize}

The paper is organized as follows. Section~\ref{sec:model} defines the system model and the secrecy constraints. The main results are presented in Section~\ref{sec:results}. In Section~\ref{sec:bound}, the new upperbound is derived and the achievable scheme is presented in Section~\ref{sec:scheme}.  The paper is concluded in Section~\ref{sec:conclusion}. We use the following notation, $\vec{a}$ for vectors, $\vec{A}$ for matrices, $\vec{A}^{\dagger}$ for the hermitian transpose of $\vec{A}$, $[A]^+$ for the $\max{A,0}$ and $\mathrm{Null}(\vec{A})$ to define the nullspace of $\vec{A}$.


\section{System model}
\label{sec:model}

We consider a communication system of two transmitters and a single receiver in vicinity of an unknown number of  passive eavesdroppers. Transmitters one and two are equipped with $M_1$ and $M_2$ antennas, respectively.  The legitimate receiver is equipped with $N$ antennas, while the $j$th eavesdropper is equipped with $N_{Ej}\leq N_E$ antennas, where $N_E$ is a constant known to the eavesdroppers. Let $\vec{x}_i$ denote the $M_i \times 1$ vector of symbols to be transmitted by transmitter $i$, where $i\in\{1,2\}$. We can
write the received signal at the legitimate receiver at time (sample) $k$ as
\begin{equation}\label{Received_signal}
\vec{Y}(k)=\sum_{i=1}^2\vec{H}_i \vec{V}_i\vec{x}_i(k)+ \vec{n}(k)
\end{equation}
and the received signal at the $j$th eavesdropper as
\begin{equation}\label{Received_signal}
\vec{Z}_{j}(k)=\sum_{i=1}^2\vec{G}_{i,j}(k) \vec{V}_i\vec{x}_i(k)+ \vec{n}_{Ej}(k),
\end{equation}

\noindent where $\vec{H}_i$ is the $N \times M_i$ matrix containing
the channel coefficients from transmitter $i$ to the receiver,  $\vec{G}_{i,j}(k)$ is the $N_{Ej} \times M_i$ matrix containing the i.i.d time varying channel coefficients from transmitter $i$ to the eavesdropper $j$ drawn from a continuous distribution with mean $\eta$ and variance $\sigma_e^2$,  $\vec{V}_i$ is the precoding unitary matrix (i.e. $\vec{V}_i\vec{V}_i^\dagger = \vec{I}$) at transmitter $i$, $\vec{n}(k)$ and $\vec{n}_{Ej}(k)$ are the $N\times 1$  and the $N_{Ej}\times 1$ additive white Gaussian
noise vectors with zero mean and variance $\sigma^2$ at the legitimate receiver and the $j$th eavesdropper, respectively. We assume that the transmitters do not know any of the eavesdroppers' channels $\vec{G}_{i,j}(k)$. We assume that $N_E< M$, where $M=M_1+M_2$.  

We define the $M_i \times 1$ channel input from legitimate transmitter $i$ as
\begin{equation}
\vec{X}_i(k)= \vec{V}_i \vec{x}_i(k).
\end{equation}

\begin{figure}
  \begin{center}
\hspace{-4mm} \includegraphics[width=.50\textwidth]{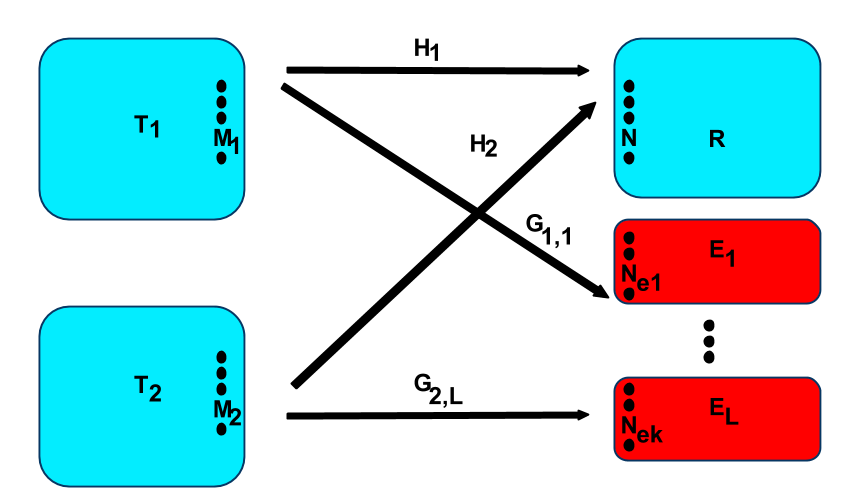}\vspace{-2mm}
\caption{System model}
  \label{sys}
  \end{center}
  \end{figure}

Each transmitter $i$ intends to send a message $W_i$ over $n$ channel uses (samples) to the legitimate receiver simultaneously while preventing the eavesdroppers from decoding its message.  The encoding occurs under  a constrained power given by
\begin{equation}
\text{E}\left\{\vec{X}_i\vec{X}_i^{\dagger}\right\} \leq P \text{  }\forall{i=1,2}
\end{equation}

Expanding the notations over $n$ channel extensions we have $\vec{H}_i^n = \vec{H}_i(1), \vec{H}_i(2), \ldots, \vec{H}_i(n)$, $\vec{G}_{i,j}^{n} = \vec{G}_{i,j}(1), \vec{G}_{i,j}(2), \ldots, \vec{G}_{i,j}(n)$ and similarly the time extended channel input, $\vec{X}_i^n$, time extended channel output at legitimate receiver,  $\vec{Y}^n$ and time extended channel output at eavesdropper $j$, $\vec{Z}_{j}^n$ as well as noise at legitimate receiver, $\vec{n}^n$ and noise at eavesdroppers, $\vec{n}_{Ej}^n$.

At each transmitter, the message $W_i$ is uniformly and independently chosen from a set of possible secret messages for transmitter $i$, $\mathcal{W}_i = \{1,2, \ldots, 2^{nR_i}\}$.  The rate for message $W_i$ is $R_i \triangleq \frac{1}{n} \log\left|\mathcal{W}_i\right|$, where $|\cdot|$ denotes the cardinality of the set. Transmitter $i$ uses a stochastic encoding function $f_i: W_i \longrightarrow \vec{X}_i^n$ to map the secret message into a transmitted symbol.  The receiver has a decoding function $\phi: \vec{Y}^n \longrightarrow (\hat{W}_1,\hat{W}_2)$, where $\hat{W}_i$ is an estimate of $W_i$.
\begin{definition}
A secure rate tuple $(R_1, R_2)$ is said to be achievable if for any $\epsilon > 0$ there exist $n$-length codes such that the legitimate receiver can decode the messages reliably, i.e.,
\begin{equation}
\text{Pr}\{(W_1,W_2) \neq (\hat{W}_1, \hat{W}_2)\} \leq \epsilon
\end{equation}
and the messages are kept information-theoretically secure against the eavesdroppers, i.e.,
\begin{equation}\label{eqn:cond}
\lim\limits_{n\longrightarrow\infty} \frac{1}{n}H(W_1, W_2|\vec{Z}_{j}^{n})\geq \lim\limits_{n\longrightarrow\infty} \frac{1}{n}H(W_1, W_2)-\epsilon \\
\end{equation}

\noindent where $H(\cdot)$ is the Entropy function and~\eqref{eqn:cond} implies the secrecy for any subset $\mathbb{S} \subset \{1,2\}$ of messages including individual messages~\cite{sennur_mac}.
\begin{equation}
\lim\limits_{n\longrightarrow\infty} \frac{1}{n}H(W_i|\vec{Z}_{j}^{n})\geq \lim\limits_{n\longrightarrow\infty} \frac{1}{n}H(W_i) - \epsilon \text{ }\forall  i =1,2
\end{equation}
\end{definition}
\begin{definition}
The sum SDoF is defined as
\begin{equation}
D_s = \lim_{P\rightarrow \infty} \sup{\sum_i \frac{R_i}{\frac{1}{2}\log P}},
\end{equation}
\noindent where the supremum is over all achievable secrecy rate tuples $(R_1, R_2)$, $D_s = d_1 + d_2$, and $d_{1}$ and $d_{2}$ are the secure DoF of transmitters one and two, respectively. 
\end{definition}

\section{Main Results}
\label{sec:results}

\begin{theorem}
The sum SDoF of the two user MAC channel is
\small
\begin{equation}
D_s = \begin{cases}
  M-N_E & \mathscr{C}_1\\
  \frac{1}{2}\max(M_1,N)+\frac{1}{2}\max(M_2,N)-\frac{1}{2}N_E & \mathscr{C}_2\\
  N & \mathscr{C}_3 \\
\end{cases}
\end{equation}
\noindent where the conditions $\mathscr{C}_1$, $\mathscr{C}_2$ and $\mathscr{C}_3$ are given by:
\begin{eqnarray}
\nonumber &&\mathscr{C}_1:   M \leq N\\ 
\nonumber  &&or\;M_1<N, M>N  \text{ and  } N_E \geq 2(M-N)\\
\nonumber  &&or\;M_1>N, M_2<N \text{ and  } N_E \geq [M_1-N+2M_2]\\
\nonumber &&\\
\nonumber &&\mathscr{C}_2:  M_1<N \text{ and } N_E < 2(M-N)\\
\nonumber &&or\; M_1>N, M_2<N \text{ and  } M_1-N\leq N_E< [M_1-N+2M_2]\\
\nonumber &&or\; M_1>N,  M_2\geq N \text{ and  } N_E \geq M-2N  \\
\nonumber &&\\
\nonumber &&\mathscr{C}_3: N_E < [M_1-N]^++[M_2-N]^+
\end{eqnarray}
\end{theorem}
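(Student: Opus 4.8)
The plan is to establish the claimed value by proving a matching converse and achievability in each of the three regimes $\mathscr{C}_1$, $\mathscr{C}_2$, $\mathscr{C}_3$. The argument is organized around the interplay of three ``budgets'': the $M=M_1+M_2$ transmit dimensions, the $N$ receive dimensions, and the $N_E$ dimensions the worst-case eavesdropper can observe. Each regime corresponds to which budget is binding, and in each regime I would show that the cooperative-jamming scheme of Section~\ref{sec:scheme} attains the upper bound of Section~\ref{sec:bound}.

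\textbf{Converse.} I would derive three candidate upper bounds and then verify region by region that the active one equals the stated $D_s$. First, since the legitimate receiver has only $N$ antennas, the total (non-secure) sum DoF to the receiver is at most $N$, hence $D_s\le N$; this is the bound realized in $\mathscr{C}_3$. Second, a secrecy-penalty bound $D_s\le M-N_E$: to hide the messages from an arbitrary eavesdropper seeing $N_E$ generic dimensions, at least $N_E$ transmit dimensions must be spent on channel-independent jamming, leaving at most $M-N_E$ for information. Formally I would start from the secrecy constraint~\eqref{eqn:cond}, apply Fano's inequality at the receiver, and bound $I(W_1,W_2;\vec{Y}^n)-I(W_1,W_2;\vec{Z}_j^n)$, exploiting the statistical equivalence of the eavesdropper's generic $N_E$-dimensional observation. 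Third, the refined bound $\tfrac12\max(M_1,N)+\tfrac12\max(M_2,N)-\tfrac12 N_E$, which I expect to arise from a per-transmitter / symmetrized argument: give a genie the other user's message and jamming, bound each user's secure contribution by roughly $\tfrac12\max(M_i,N)$, sum over the two orderings, and absorb the shared secrecy cost $-\tfrac12 N_E$. The overall converse is the minimum of the applicable bounds, which one checks is exactly the tabulated $D_s$ in each of $\mathscr{C}_1,\mathscr{C}_2,\mathscr{C}_3$.

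\textbf{Achievability.} The scheme allocates the $M$ transmit dimensions among three ingredients: (i) null-space jamming---each transmitter with $M_i>N$ sends i.i.d.\ Gaussian noise in $\mathrm{Null}(\vec{H}_i)$, of dimension $[M_i-N]^+$, which is invisible at the legitimate receiver but corrupts every eavesdropper; (ii) visible jamming---additional Gaussian noise that does reach the receiver but is resolved there using spare receive dimensions; and (iii) information streams in the remaining directions, sent through the unitary precoders $\vec{V}_i$ so that the receiver, knowing $\vec{H}_1,\vec{H}_2$, can separate information from jamming and decode reliably. In $\mathscr{C}_3$ the null-space budget $[M_1-N]^++[M_2-N]^+$ already exceeds $N_E$, so pure null-space jamming blinds the eavesdropper while all $N$ receive dimensions carry information, giving $D_s=N$. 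In $\mathscr{C}_1$ one spends exactly $N_E$ dimensions on jamming and $M-N_E$ on information, handling the two ``mixed'' sub-cases by accounting for the excess antennas at one transmitter. In $\mathscr{C}_2$ the $\tfrac12$-factors are realized by a balanced split in which each transmitter devotes half of its usable dimensions to information and the other half, together with the shared $N_E$ jamming load, to masking.

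\textbf{The main obstacle.} The crux is the secrecy analysis under unknown eavesdropper CSI. With known CSI one would simply align jamming into the eavesdropper's subspace; here the scheme must instead drive the information-leakage DoF to zero \emph{simultaneously} for every eavesdropper channel $\vec{G}_{i,j}$ drawn from the continuous distribution. I would prove that the jamming ``saturates'' the eavesdropper: for any $N_{Ej}\le N_E$ and any generic $\vec{G}_{i,j}$, the $N_E$ jamming dimensions occupy the eavesdropper's entire observation space, so that $\tfrac1n I(W_i;\vec{Z}_j^n)\to 0$. This would follow from the almost-sure full rank (genericity) of the channel matrices, which keeps information and jamming linearly inseparable inside the eavesdropper's $N_E$-dimensional projection, forcing $H(\vec{Z}_j^n\mid W_1,W_2)$ and $H(\vec{Z}_j^n)$ to differ by only $o(\log P)$. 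Matching this leakage-free achievability against the converse in all three regimes then completes the proof.
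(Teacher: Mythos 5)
Your overall architecture matches the paper's: three upper bounds ($N$, $M-N_E$, and the half-sum bound) combined by a minimum, plus an achievability built on channel-independent jamming. But the achievability half has a genuine gap exactly where the $\tfrac12$ factors come from. In region $\mathscr{C}_2$ your ``balanced split\ldots half to information and half to masking'' does not produce the claimed rates: if the visible jamming is generic, jamming that spends $N_E$ transmit dimensions also occupies $\min(N_E,N)$ dimensions at the \emph{legitimate} receiver, and the resulting SDoF is of the form $\min(M-N_E,\,N-N_E)$, strictly below $\tfrac12(\max(M_1,N)+\max(M_2,N)-N_E)$ whenever $N_E>0$. The missing idea is \emph{aligned jamming}: the two transmitters steer their jamming into a common subspace $\mathcal{I}=A_1\cap A_2$ of their received signal spaces (nonempty only when $M\geq N$), choosing $\vec{V}^J_1,\vec{V}^J_2$ so that $\vec{H}_1\vec{V}^J_1=\vec{H}_2\vec{V}^J_2=\mathcal{I}$. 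Then $N_E$ transmit dimensions of jamming collapse into only $N_E/2$ receive dimensions at the legitimate receiver while still spanning $N_E$ generic dimensions at every eavesdropper; for odd $N_E$ the paper additionally invokes real interference alignment to split the leftover dimension. Without this device, and its combination with nullspace jamming in the mixed cases $M_1>N$, $M_2<N$ (where the paper allocates $J_1=[M_1-N]^+$ to the nullspace and aligns $J_2=\min(M_2,\tfrac{N_E-J_1}{2})$), your scheme cannot meet the bound in any of the $\mathscr{C}_2$ sub-cases.

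The third converse bound is also not established by your sketch: ``give a genie the other user's message\ldots absorb the shared secrecy cost $-\tfrac12 N_E$'' names the shape of the answer rather than deriving it. The paper's argument counts the eavesdropper-decodable DoF $d_e^1,d_e^2$: setting $\vec{G}_1=\vec{0}$ (which cannot degrade the scheme) reduces the channel to a Z channel and gives $d_1+d_2+d_e^2\le\max(M_2,N)$, symmetrically $d_1+d_2+d_e^1\le\max(M_1,N)$, and since the transmitted signal occupies all $M>N_E$ dimensions, $d_e^1+d_e^2=N_E$; summing the three relations yields the half bound. You would need to supply these (or equivalent) steps. Finally, your secrecy analysis stops at leakage $o(\log P)$, but the constraint \eqref{eqn:cond} is an equivocation requirement and jamming alone leaves each eavesdropper a \emph{constant} positive rate $C$. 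The paper closes this with the second step of its noise injection: Wyner random binning whose randomness rate $R^t_i-R_i$ is matched to the worst-case constant $C$ --- computable because the transmitters control the jamming ratio $\alpha$ and know the eavesdroppers' channel statistics --- so that $\tfrac1n I(W_i;\vec{Z}^n_j)\rightarrow 0$ as $n\rightarrow\infty$. Your ``saturation'' argument by itself does not satisfy the definition of an achievable secure rate tuple.
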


\begin{proof}
To prove the theorem, we deduce the converse in Section~\ref{sec:bound} and provide the achievable scheme in Section~\ref{sec:scheme}.
\end{proof}
\normalsize


\section{Converse}\label{sec:bound}
\begin{theorem}
The number of SDoF of the two user MAC channel is upperbound as,
\small
\begin{equation}
D_s\hspace{-1mm}\leq \hspace{-1mm}\min (N,M_1+M_2-N_E,\frac{1}{2}(\max (M_1,N)+\max(M_2,N)-N_E))
\vspace{-2mm}
\end{equation}
\normalfont
\end{theorem}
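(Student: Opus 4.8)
The plan is to establish the three arguments of the minimum separately, since an upper bound equal to the minimum of three quantities follows once each is shown to hold. First I would make two standard reductions. Because the secrecy condition~\eqref{eqn:cond} must be met simultaneously for every eavesdropper and is tightest for an eavesdropper equipped with the maximal number $N_E$ of antennas, it suffices to prove the bound against a single eavesdropper with $N_E$ antennas and observation $\vec{Z}^n$. Moreover, since revealing the eavesdroppers' channel realizations to the legitimate terminals can only enlarge the achievable secrecy-rate region, I would prove each bound under the stronger assumption of full eavesdropper CSI; the resulting inequality then applies a fortiori to the statistical-CSI model of Section~\ref{sec:model}. Throughout I would combine Fano's inequality (reliable decoding at the receiver), the secrecy constraint $\tfrac1n I(W_1,W_2;\vec{Z}^n)\le\epsilon$, and elementary prelog (degrees-of-freedom) estimates of the differential entropies of linear Gaussian observations.

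For the first two arguments I would work with the single chain
\begin{equation}
n(R_1+R_2)\le I(W_1,W_2;\vec{Y}^n\mid\vec{Z}^n)+n\epsilon',
\end{equation}
obtained from $H(W_1,W_2)=I(W_1,W_2;\vec{Y}^n)+H(W_1,W_2\mid\vec{Y}^n)$, Fano, the inclusion $I(W_1,W_2;\vec{Y}^n)\le I(W_1,W_2;\vec{Y}^n,\vec{Z}^n)$, and the secrecy constraint. Using the Markov chain $(W_1,W_2)\to(\vec{X}_1^n,\vec{X}_2^n)\to(\vec{Y}^n,\vec{Z}^n)$ and the fact that, conditioned on the inputs and the channel, the output is the additive noise, this reduces to bounding $\tfrac1n h(\vec{Y}^n\mid\vec{Z}^n)=\tfrac1n\bigl(h(\vec{Y}^n,\vec{Z}^n)-h(\vec{Z}^n)\bigr)$. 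The joint observation is an $(N+N_E)$-dimensional linear image of the $M$-dimensional transmitted signal, so its prelog is $\min(N+N_E,M)$, while a matching estimate shows that the generic $N_E$-row eavesdropper channel makes $h(\vec{Z}^n)$ carry prelog $N_E$. Subtracting yields $D_s\le\min(N+N_E,M)-N_E=\min(N,\,M-N_E)$, which simultaneously establishes the first two arguments $N$ and $M_1+M_2-N_E$.

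The remaining term $\tfrac12(\max(M_1,N)+\max(M_2,N)-N_E)$ is the crux. Here I would adapt the secrecy-penalty and role-of-a-helper technique of Xie--Ulukus to the MIMO setting. The idea is to genie-enhance the receiver with enough antennas to fully resolve each transmitter's signal, i.e.\ to dimension $\max(M_i,N)$: when $M_i\le N$ the $N$ antennas already resolve the $M_i$-dimensional signal, while for $M_i>N$ the genie supplies the missing dimensions, which is exactly what the $\max(M_i,N)$ term records. I would then derive two symmetric secrecy-penalty inequalities, each bounding $n(R_1+R_2)$ by an expression of the form $\tfrac{n}{2}(\max(M_1,N)+\max(M_2,N))\log P-h(\vec{Z}^n)+o(n\log P)$, in which the difference between the legitimate and eavesdropper entropies again produces the penalty $-N_E$ through the prelog of $h(\vec{Z}^n)$. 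The factor $\tfrac12$ emerges when these inequalities are added and $\sum_i R_i$ is isolated, exactly as in the single-antenna MAC where cooperative jamming forces one jamming dimension per secured dimension.

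The main obstacle is this third bound. Two points require care: establishing the lower bound $\tfrac1n h(\vec{Z}^n)\ge\tfrac12 N_E\log P+o(\log P)$ without committing to a particular input distribution, which I would handle by the artificial-noise device of the secrecy-penalty lemma, perturbing the inputs by vanishing Gaussian noise so that the $N_E$ generic directions seen by the eavesdropper are excited at prelog $N_E$; and justifying that the $\max(M_i,N)$ genie enhancement, together with the symmetrization, produces exactly the factor $\tfrac12$ rather than a looser constant. The first two arguments, by contrast, reduce to routine prelog computations once the reductions to a single worst-case eavesdropper and to full CSI are in place.
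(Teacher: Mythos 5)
Your reductions and your handling of the first two terms follow a legitimately different route from the paper: the paper obtains $D_s\le N$ trivially from the receiver's antenna count, and obtains $M-N_E$ by adding $|M-N|^+$ antennas to the receiver and evaluating the prelog of the enhanced MIMO wiretap capacity $\max_{K_x}\log|\vec{I}+\vec{H}K_x\vec{H}^{\dagger}|-\log|\vec{I}+\vec{G}K_x\vec{G}^{\dagger}|$ via eigendecomposition, whereas you work directly with $h(\vec{Y}^n\mid\vec{Z}^n)$. One caveat even here: your claim that the generic eavesdropper channel makes $h(\vec{Z}^n)$ carry prelog $N_E$ is not unconditional. Under your own full-CSI enhancement the transmitters may zero-force $\vec{G}$, and even without CSI they may excite only $d<N_E$ power-bearing directions, making the prelog of $h(\vec{Z}^n)$ anything between $0$ and $N_E$. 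The repair is standard: bound $h(\vec{Y}^n\mid\vec{Z}^n)$ directly, noting that the eavesdropper's $N_E$ generic rows resolve (up to noise) $\min(N_E,k)$ of the $k$ excited input dimensions, so the conditional prelog is at most $\min(N,M-N_E)$ for every input covariance. With that fix your first two bounds go through.

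The genuine gap is in the third term, which you yourself identify as the crux. Two concrete problems. First, the inequality you propose, $n(R_1+R_2)\le \tfrac{n}{2}\bigl(\max(M_1,N)+\max(M_2,N)\bigr)\log P-h(\vec{Z}^n)+o(n\log P)$, is internally inconsistent with your own plan: if a single such inequality held together with $\tfrac1n h(\vec{Z}^n)\ge\tfrac12 N_E\log P$, the bound would follow at once and your symmetrize-and-add step producing the factor $\tfrac12$ would be vacuous; this is not the Xie--Ulukus structure, in which each of the two inequalities involves one genie-enhanced observation and the entropy of the \emph{other} user's signal. Second, and fatally, the lower bound $\tfrac1n h(\vec{Z}^n)\ge\tfrac12 N_E\log P+o(\log P)$ is false for precisely the schemes a converse must defeat: zero-forcing the eavesdropper (permitted once you grant full CSI) or signaling in fewer than $N_E$ directions gives $h(\vec{Z}^n)$ prelog $0$ (resp.\ $d<N_E$), and the artificial-noise device cannot rescue it, since a vanishing, non-$P$-scaling perturbation contributes only $O(n)$, i.e.\ prelog $0$, to $h(\vec{Z}^n)$. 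In Xie--Ulukus the penalty actually arises from the role-of-a-helper lemma, which lower-bounds $h(\vec{Z}^n)$ by the differential entropies of the interfering/jamming signals, charging one hidden dimension per secured dimension; you never formulate that pairing. The paper's mechanism is different again: it introduces the eavesdropper-decodable DoF $d_e^1,d_e^2$ of the two messages, degrades to two Z channels by nulling $\vec{G}_1$ (resp.\ $\vec{G}_2$) to get $d_1+d_2+d_e^i\le\max(M_i,N)$ for $i=1,2$, imposes $d_e^1+d_e^2=N_E$, and adds, giving $2D_s\le\max(M_1,N)+\max(M_2,N)-N_E$. Without either the role-of-a-helper pairing or this Z-channel accounting, your outline for the third term does not close.
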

\begin{proof}
The first bound is the due to limited number of antennas at the receiver which limits the SDoF as
\vspace{-1mm}
\begin{equation}\label{eqn1}
D_s \leq N
\vspace{-1mm}
\end{equation}
The second bound represent the DoF loss caused by the number of eavesdroppers' antennas on the transmitter side, without loss of generality, we provide an upperbound for the case of existence of only one eavesdropper with $N_E$ antennas. The SDoF of the single eavesdropper case is certainly an upperbound for the multiple eavesdroppers case, as increasing the number of eavesdroppers can only reduce the SDoF of the legitimate users. We omit the eavesdropper subscript for simplicity of notation. Suppose that we can added $|M-N|^+$ antennas to the receiver side that wont decrease the SDoF, the sum rate is upperbounded by the rate of an equivalent MIMO wiretap channel with $(M_1+M_2)$ transmit antennas as,
\vspace{-.5mm}
\begin{equation}
R_s \leq \max_{K_x} \log | (\vec{I}+\vec{H}_{11} K_x \vec{H}_{11}^{\dagger})| -\log | (\vec{I}+\vec{G} K_x \vec{G}^{\dagger})
\vspace{-.5mm}
\end{equation}
As $\vec{H}\vec{H}^{\dagger}$ and $\vec{G}\vec{G}^{\dagger}$ are hermitian, they can be diagonalized as $\vec{G}\vec{G}^{\dagger}= \vec{U}_G\vec{\Lambda}_G\vec{U}_G^{\dagger}$, $\vec{H}\vec{H}^{\dagger}= \vec{U}_H\vec{\Lambda}_H\vec{U}_H^{\dagger}$, where $\vec{U}_G\vec{U}_G^{\dagger}=\vec{I}$ and $\vec{U}_G\vec{U}_G^{\dagger}=\vec{I}$. Without loss of generality, Let $\vec{V}= [\vec{V}_L \vec{V}_N]$, where $\vec{V}_N$ contains the $N_E$orthonormal basis of $\vec{G}$, while $\vec{V}_L$ contains the $M-N_E$ basis of the orthogonal complement of $\vec{V}_N$, and  $K_x = \vec{V}\vec{\Lambda}_{K_x}\vec{V}^{\dagger}$. 
\vspace{-.5mm}
\begin{eqnarray}
\small
\label{eq2}\nonumber D_s\hspace{-3mm}&\leq& \hspace{-2.5mm}\lim_{P\rightarrow \infty} \frac{1}{\log P} \big(\max_{K_x} [\log | \vec{I}+\vec{H}_{11}  \vec{V}\vec{\Lambda}_{K_x}\vec{V}^{\dagger} \vec{H}_{11}^{\dagger}| \big)\\
\nonumber &-&\hspace{-2.5mm}\log | \vec{I}+\vec{U}\vec{\Lambda}_G \vec{U}^{\dagger}\vec{V}\vec{\Lambda}_{K_x}\vec{V}^{\dagger}|]\\
\label{eq3}\nonumber&\leq&\hspace{-2.5mm}\lim_{P\rightarrow \infty}\frac{1}{\log P} \big(\max_{\Lambda_{K_x}} \log  |\Lambda_{H}\Lambda_{K_x}| -\log  | \Lambda_G \Lambda_{K_x}| -C_2 \big)\\
\nonumber &\leq&\hspace{-2.5mm}\lim_{P\rightarrow \infty} \frac{1}{\log P}\big(\max_{\Lambda_{K_x}} \prod_{i=1}^{M_1+M_2}{\lambda_{H}^i\lambda_{K_x}^i}- \prod_{i=1}^{N_E}{\lambda_{G}^i\lambda_{K_x}^i\big)}\\
\label{up2}&\leq& \hspace{-2.5mm} M_1+M_2-N_E 
\end{eqnarray}
\normalsize
where $\lambda^i_{K_x}$ is the $i$th diagonal value of $\Lambda_{K_x}$ and similarly defined $\lambda_{G}^i, \lambda_{H}^i$,  (\ref{eq2}) is because $\log |\vec{I}+\vec{A}\vec{B}|= \log |\vec{I}+\vec{B}\vec{A}|$ for the above matrices, (\ref{eq3}) is because $\lim\limits_{P\rightarrow \infty} \frac{\log |\vec{I}+\vec{B}|}{\log P} = \lim\limits_{P\rightarrow \infty} \frac{\log |\vec{B}|}{\log P}  $ for any matrix $\vec{B}$, and because $|\vec{A}\vec{B}|= |\vec{A}||\vec{B}|$ for square matrices, and ${|\vec{V}_{K_x}|, |\vec{V}_H|, |\vec{U}| }$ are independent of $P$. All $C_{i \; : \; i\in\{1,2\}}$ are constants independent of $P$. 
\noindent The third bound represents the DoF loss of each transmitter due to the number of eavesdroppers antennas available.  Let $d_e^1$ and $d_e^2$ be the degrees of freedom for the parts of the messages sent by transmitter one and two, respectively, which can be decoded by the eavesdropper. Note that $\sum_{i=1,2} d_i+d_e^2=M$, because the transmitters is assumed to be using all their $M$ DoF, where from the secure DoF perspective the DoF wasted a non secure message is equivalent to non transmitting over those DoF. Suppose that we can set all coefficients of  $\vec{G}_{1}$ to zero to hide all information sent by transmitter one from the eavesdropper, so the resulting channel becomes identical to the Z channel of Figure~\ref{zch}(b). However,  setting $\vec{G}_{1}$ coefficients to zero cannot decrease the performance of the coding scheme. Therefore, the DoF tuple for the modified channel is upperbounded by
\vspace{-.5mm}
\begin{equation}\label{z1}
d_{1}+d_{2}+d_{e}^2 \leq \text{max}(M_2, N)\\
\end{equation}
\noindent Similarly, using the modified Z channel in Figure~\ref{zch}(a),
\begin{equation}\label{z2}
d_{1}+d_{2}+d_{e}^1\leq \text{max}(M_1,N)\\
\end{equation} 
Moreover, since the eavesdropper has $N_E$ antennas then,
\begin{equation} \label{eve}
d_{e}^1+d_{e}^2=N_E
\end{equation}
where the LHS is strictly equal to the RHS because the transmitter messages is occupying $M>N_E$ dimensions as mentioned above.
Combining (\ref{z1}), (\ref{z2}) and (\ref{eve}) we have,
\begin{equation}\label{c3}
D_s\leq \frac{1}{2}(\max(M_1,N) +\max(M_2, N)-N_E)
\end{equation} 
\begin{figure}  \label{zch}
\vspace{-4mm}
  \begin{center}
\hspace{-4mm} \includegraphics[width=.4\textwidth]{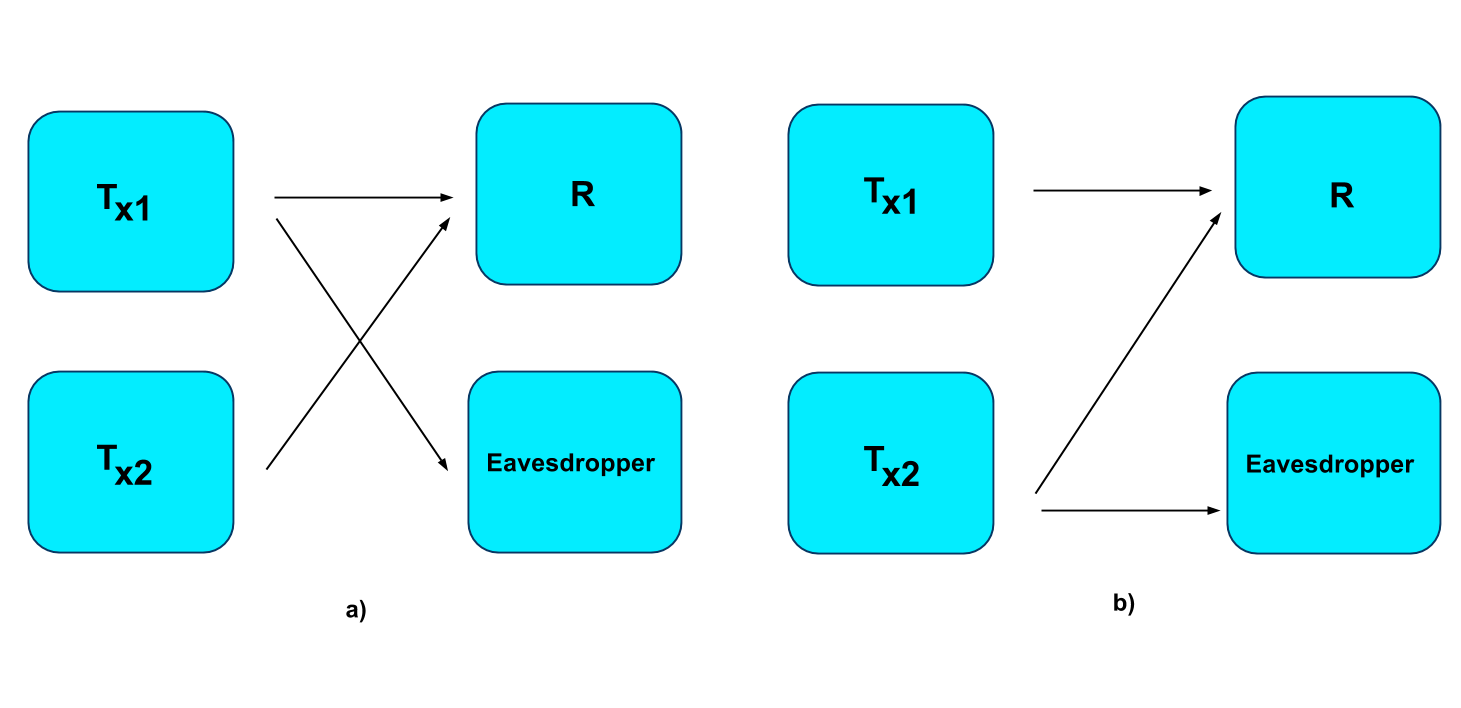}\vspace{-2mm}
\caption{DoF-equivalent Z channel }
\label{zch}  
\end{center} \vspace{-6mm}
  \end{figure}
\normalsize
From (\ref{eqn1}),(\ref{up2}) and (\ref{c3}), we have
\footnotesize
\begin{equation}
D_s\leq \min\left(\frac{1}{2}(\max(M_1,N) +\max(M_2, N)-N_E), M-N_E,N\right)\\
\vspace{-1mm}
\end{equation} 
\normalsize
\end{proof}

\section{Achievable scheme}
\label{sec:scheme}
For securing the legitimate messages, the transmitters uses a two-step noise injection by simultaneously sending a jamming signal and using a stochastic encoder as follows,

\begin{enumerate}
\item The transmitters send a jamming signal with power $P^J=\alpha P$ that guarantees that all eavesdropper have a constant rate ($o(log P)$) for all legitimate signal power values,  where $\alpha$ is a constant controlled the transmitters to adjust the jamming.
\item A stochastic encoder is built using random binning. The encoder randomness is designed to be larger that any of the post-jamming eavesdroppers leakage, hence all eavesdroppers would have zero rate with the code length goes to infinity meeting the secrecy constraints in \eqref{eqn:cond}.
\end{enumerate} 
The jamming signal transmitted is a $N_E$ vector $\vec{r}=[\vec{r}_1 \text{ } \vec{r}_2]^T$ with random symbols using $\vec{V}^J_{1}$ and $\vec{V}^J_{2}$ as jamming precoders\footnote{For the special case $N_E=1$, only one user sends a single jamming symbol.}. Hence, the transmitted coded signal can be broken into legitimate signal, $\vec{s}_i$, and jamming signal, $\vec{r}_i$, such that 
$$\vec{x}_i = \left[\begin{array}{c} \vec{s}_i\\  \vec{r}_i \end{array}\right], i \in\{1,2\}.$$
Accordingly, the precoder, $\vec{V}_i$ can be also broken into legitimate precoder, $\vec{V}^L_i$, and jamming precoder, $\vec{V}^J_i$ such that
$$\vec{V}_i = \left[\begin{array}{cc} \vec{V}^L_i &  \vec{V}^J_i \end{array}\right] i \in\{1,2\}.$$
  
Choosing $\vec{V}^J$ to be the unitary matrix, the jamming power becomes $P^J=\text{E}\{\text{tr}(\vec{r}_i\vec{r}_i^{\dagger})\} = \alpha P$, where $\alpha$ is a constant controlled by the transmitter.

\begin{proposition}
The jamming signal, $\vec{r}$, overwhelms \textit{all} eavesdroppers' signal space, and all eavesdroppers end up decoding zero DoF of the legitimate messages. The transmitter then uses a stochastic encoder to satisfy the secrecy constraint in \eqref{eqn:cond}\\
\end{proposition}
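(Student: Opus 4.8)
The plan is to establish the two assertions of the proposition separately: first, a degrees-of-freedom (DoF) computation showing that the information leakage $I(W_1,W_2;\vec{Z}_j^n)$ to every eavesdropper $j$ has zero pre-log, and second, a standard wiretap (random-binning) argument showing that this vanishing leakage DoF can be converted into the exact secrecy constraint \eqref{eqn:cond}. Throughout I would split each eavesdropper's observation into its legitimate and jamming components, writing $\vec{Z}_j = \vec{G}_j^L \vec{s} + \vec{G}_j^J \vec{r} + \vec{n}_{Ej}$, where $\vec{G}_j^L = [\vec{G}_{1,j}\vec{V}^L_1 \;\; \vec{G}_{2,j}\vec{V}^L_2]$ and $\vec{G}_j^J = [\vec{G}_{1,j}\vec{V}^J_1 \;\; \vec{G}_{2,j}\vec{V}^J_2]$ are the effective legitimate and jamming channels seen by eavesdropper $j$, and the entropies are evaluated by granting each eavesdropper its own channel (which only strengthens the secrecy claim).

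First I would argue that the jamming spans the eavesdropper's entire observation space. Since $\vec{G}_j^J$ is an $N_{Ej}\times N_E$ matrix with $N_{Ej}\le N_E$ whose entries are drawn from a continuous distribution, it has full row rank $N_{Ej}$ almost surely; consequently the $N_E$-dimensional jamming vector $\vec{r}$, transmitted at power $\alpha P$, contributes a full-rank term $\alpha P\,\vec{G}_j^J(\vec{G}_j^J)^{\dagger}$ to the covariance of $\vec{Z}_j$, with every eigenvalue scaling like $P$. This is the precise sense in which the jamming overwhelms the eavesdropper's signal space, and the argument holds simultaneously for every eavesdropper and every channel realization, which is essential because the transmitters know none of the $\vec{G}_{i,j}$.

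Next I would bound the leakage DoF. Writing $I(W_1,W_2;\vec{Z}_j) = h(\vec{Z}_j) - h(\vec{Z}_j\mid W_1,W_2)$, the total received power is $O(P)$ spread over $N_{Ej}$ dimensions, so $h(\vec{Z}_j)\le \tfrac{N_{Ej}}{2}\log P + O(1)$. For the subtracted term I would use $h(\vec{Z}_j\mid W_1,W_2)\ge h(\vec{Z}_j\mid W_1,W_2,\vec{s}) = h(\vec{G}_j^J\vec{r}+\vec{n}_{Ej})$, since conditioning on the realized legitimate codeword $\vec{s}$ removes the deterministic term $\vec{G}_j^L\vec{s}$, and the jamming $\vec{r}$ is independent of $(W_1,W_2,\vec{s})$. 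By the full-rank property just established, $h(\vec{G}_j^J\vec{r}+\vec{n}_{Ej}) = \tfrac{N_{Ej}}{2}\log P + O(1)$. Subtracting, the two leading $\tfrac{N_{Ej}}{2}\log P$ terms cancel, so $I(W_1,W_2;\vec{Z}_j) = O(1)$; the leakage carries zero DoF, which is exactly the statement that every eavesdropper decodes zero DoF of the legitimate messages.

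Finally I would invoke a stochastic (random-binning) wiretap encoder to upgrade this vanishing leakage DoF into the information-theoretic secrecy condition \eqref{eqn:cond}. Associating with each message a bin of auxiliary codewords and transmitting a randomly chosen one, the binning randomness is designed to exceed each eavesdropper's post-jamming leakage; because that leakage is $o(\log P)$ for every $j$, a single positive binning rate simultaneously drives $\tfrac{1}{n}I(W_1,W_2;\vec{Z}_j^n)\to 0$ for all eavesdroppers as $n\to\infty$, giving \eqref{eqn:cond}. The main obstacle, and the feature distinguishing this scheme from CSI-aware cooperative jamming, is that no step may use knowledge of $\vec{G}_{i,j}$: the bound must hold uniformly over the unknown number of eavesdroppers and all their channel realizations. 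This is precisely what the almost-sure full-row-rank property of $\vec{G}_j^J$ (guaranteed by $N_{Ej}\le N_E$ together with the continuous channel distribution) delivers, so the same fixed jamming precoder $\vec{V}^J$ neutralizes every eavesdropper at once.
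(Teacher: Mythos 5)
Your proposal is correct and takes essentially the same route as the paper's proof: both bound the leakage as $h(\vec{Z}^n)-h(\vec{Z}^n\mid \vec{s}_1^n,\vec{s}_2^n)$, exploit the fact that the jamming power $P^J=\alpha P$ scales with $P$ so that the two $N_E\log P$ pre-log terms cancel and leave an $O(1)=o(\log P)$ residual, and then invoke a Wyner random-binning encoder whose randomization rate exceeds this residual to obtain \eqref{eqn:cond}. Your write-up is in fact more careful on two points the paper leaves implicit --- the almost-sure full row rank of the effective jamming channel $\vec{G}_j^J$ (which is what justifies $h(\vec{G}_j^J\vec{r}+\vec{n}_{Ej})=\tfrac{N_{Ej}}{2}\log P+O(1)$) and the uniformity over the unknown set of eavesdroppers, which the paper handles by reducing to a single strongest eavesdropper with exactly $N_E$ antennas via $\bar{R}_{ej}\leq\bar{R}_e$.
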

Let $\bar{R}_e= I(\vec{Z}; \vec{s}_1, \vec{s}_2)$ be the rate of the eavesdropper with the best channel assuming it also has $N_E$ antennas. Let $R_e = I(\vec{Z}; \vec{W}_1, \vec{W}_2)$ be the legitimate message rate of the same eavesdropper, where $R_e < \bar{R}_e$ because of the stochastic encoder used. let $\bar{R}_ej$ be the rate of the $j$th eavesdropper. Then $\bar{R}_ej\leq \bar{R}_e \forall j \in L$, where $L$ is the unknown number of eavesdroppers.
\begin{proof}
\begin{eqnarray}
\nonumber n \bar{R}_e && \leq I(\vec{Z}^n; \vec{s}_1^n, \vec{s}_2^n)\\
\nonumber  && = h(\vec{Z}^n) -h(\vec{Z}^n|\vec{s}_1^n, \vec{s}_2^n)\\
\nonumber \bar{R}_e && \leq N_E log P -N_E log P^J + C\\
\nonumber     && \leq N_E log P -N_E log \alpha P +C\\
&&\leq C=o(log P)  \label{con}
\end{eqnarray}
\noindent where $C$ is a constant that does not depend on $P$ and known to the transmitter. 
\end{proof}
\begin{remark}
The constant eavesdropper rate comes from the fact that $P^J$ is controlled by the transmitter.  Hence, setting $P^J=\alpha P$, a constant SNR is guaranteed at the eavesdroppers and a constant rate independent of $P$. For the case of the constant known eavesdropper channel or unknown fading channel with known statistics, the constant $C$ is known transmitter.
\end{remark}

The transmitters use the rate difference to transmit perfectly secure messages using a stochastic encoder similar to the one described in~\cite{khan} according to the strongest eavesdropper's rate, $C$, in worst case scenario to achieve the secrecy constraint in \ref{eqn:cond} . Let 
\begin{equation}
C_i \in C(R^t_i, R_i, n) \forall i=1,2
\end{equation}

denote a Wyner code of size $2^{nR^t_i}$ to encode a confidential message set $W_i =
\{1, 2, . . . , 2^{nR_i} \}$ of transmitter $i$, where $R^t_i \geq R_i$ and $n$ is the codeword length. Therefore, there are two rates that define the Wyner code, the legitimate channel code rate $R^t_i$
and the secure message rate $R_i$. As a result, the rate $R^l=R^t_i - R_i$ is the cost of secrecy or the rate lost to secure the legitimate message. $R^l$ defines the amount of randomness added in a Wyner code. For a Wyner code, if $\hat{R}_e= R_l$, then the eavesdropper cannot decode the secure message sent (i.e $\lim_{n\longrightarrow \infty} \frac{1}{n} R_e \leq \epsilon)$.
The Wyner code $C(R^t_i, R_i, N)$ is built using random binning \cite{9}. We generate $2^{nR^t_i}$ codewords $s_i^n(w_i, v_i)$, where $w_i = 1, 2, . . ., 2^{nR_i} $,
and $v_i = 1, 2, . . ., 2^{n(R^t_i-R_i)}$, by choosing the $2^{nR_i^t}$ symbols $s_i(w_i, v_i)$ independently at random according to the input distribution $p(s_i)$. Then we distribute them randomly into  $2^{nR_i}$ bins such that each bin contains $2^{n(R_i^t-R_i)} $ codewords.
The stochastic encoder of $C(R_i^t, R_i, N)$ is described by a matrix of conditional
probabilities so that, given $w_i \in W_i$, we randomly and uniformly select a codeword to transmit  from the bin $w_i$ or in other words, we select $v_i$ from
$\{1, 2, . . . , 2^{n(R^t_i-R_i)}\}$ and transmit $s_i^n(w_i, v_i)$. We assume that the legitimate
receiver employs a typical-set decoder. Given the received signal $y^n$, the legitimate
receiver tries to find a pair $(\hat{w} , \hat{v})$ so that $s^n(\hat{w} , \hat{v})$ and $y^n$ are jointly typical \cite{9}. 
We set $R_i= I(\vec{s}_i, \vec{Y})- I(\vec{s}_i,\vec{Z}) -\epsilon$ and  $R_i^t=I(\vec{s}_i, \vec{Y})-\epsilon$. The error probability and equivocation calculations are
straight forward extensions of similar Wyner random binning
encoders (\cite{9}, \cite{bcs}).

\begin{eqnarray}
H(\vec{W_i}^n)\hspace{-2mm} &=& I(\vec{s}_i^n; \vec{Y}^n) - I(\vec{s}_i^n;\vec{Z}^n) -m\epsilon\\
H(\vec{W_i}^n|\vec{Z}^n)&=&I(\vec{s}_i^n; \vec{Y}^n|\vec{Z}^n) - I(\vec{s}_i^n;\vec{Z}^n|\vec{Z}^n)-n\epsilon\hspace{4mm}\\
&=& I(\vec{s}_i^n; \vec{Y}^n, \vec{Z}^n) - I(\vec{s}_i^n,\vec{Z}^n)-n\epsilon\\
&\geq& H(\vec{W_i}^n) -n\epsilon\\
\end{eqnarray}
and,
\begin{eqnarray}
H(\vec{W_1}^n,\vec{W_2}^n|\vec{Z}^n)\hspace{-2mm} &=& H(\vec{W_1}^n|\vec{Z}^n)+H(\vec{W_2}^n|\vec{Z}^n)\hspace{2mm}\\
&\geq& H(\vec{W_1}^n) + H(\vec{W_2}^n) -2n\epsilon\\
&\geq& H(\vec{W_1}^n, \vec{W_2}^n) -2n\epsilon
\end{eqnarray}

 \begin{eqnarray} \label{secrate}
\nonumber \sum_{i=1}^2 R_i \hspace{-7mm}&& \geq \frac{1}{2} \log \left| \vec{I}+ \sum_{i=1}^2 (\vec{U}\vec{H}_i\vec{V}_i^L\vec{s}_i\vec{s}^{\dagger}_i\vec{V}^{L\dagger}_i\vec{H}_i^{\dagger}\vec{U}^{\dagger})\right| -R_e\\ 
\end{eqnarray}

As $\lim_{n \longrightarrow \infty} \frac{1}{n} R_e \leq \epsilon$ for all values of $\vec{G}_i$ and $P$, a positive secrecy rate, which is monotonically increasing with $P$, is achieved. Computing the secrecy degrees of freedom boils down to calculating the degrees of freedom for the first term in the right hand side of \eqref{secrate}, which represents the receiver DoF after jamming is applied.
 
With the eavesdroppers completely blocked, it remains to show how the jamming signal directions are designed to achieve the maximum possible secure DoF. First, we study the secure DoF for $M \leq N$, then go for $M > N$ with different regions of the relations between $(M_1, M_2, N, N_E)$.

\subsection{Achievability for $M \leq N$}

For this region, transmitters one and two send the jamming signals using precoders $\vec{V}^J_{1}$ and $\vec{V}^J_{2}$, with dimensions $J_1$ and $J_2$, respectively,  such that $J_1+J_2=N_E$. 

\subsubsection*{Random jamming}
The jamming precoders and symbols are randomly chosen. We call this method random jamming. The receiver zero-forces the jamming signal using the post-processing matrix $\vec{U}$ as in~\eqref{zero}. Accordingly, $M-N_E$ secure DoF can be sent. Since $N \geq M $, the receiver can decode $M-N_E$ DoF after zero-forcing the jamming signal.\\

\begin{equation}\label{zero}
\vec{U}= [\vec{I} - \vec{a}\vec{a}^{-1}]
\end{equation}
 where
\begin{equation}
\vec{a}= \vec{H}_1 \vec{V}_1^J+\vec{H}_2 \vec{V}_2^J
\end{equation}

\subsection{Achievability for $M>N$}
For this region we use three methods for jamming, aligned jamming, nullspace jamming and random jamming. As random jamming was described above, the other two will be explained in the following.

\subsubsection*{Aligned jamming}
The jamming signals of both transmitters are aligned at the legitimate receiver signal space. 
Let $\mathcal{I}$ be the jamming space at the receiver. Each transmitter aligns a part or the whole of its jamming signal into this jamming space. The total signal space of transmitter one and transmitter two occupies  \emph{only} $M_1$ and $M_2$ dimensions, receptively, at the receiver. These two spaces are distinct if $M_1<N$, so a common space is needed to direct the jamming signal into.\\
\begin{figure}
  \begin{center}
\hspace{-4mm} \includegraphics[width=.50\textwidth]{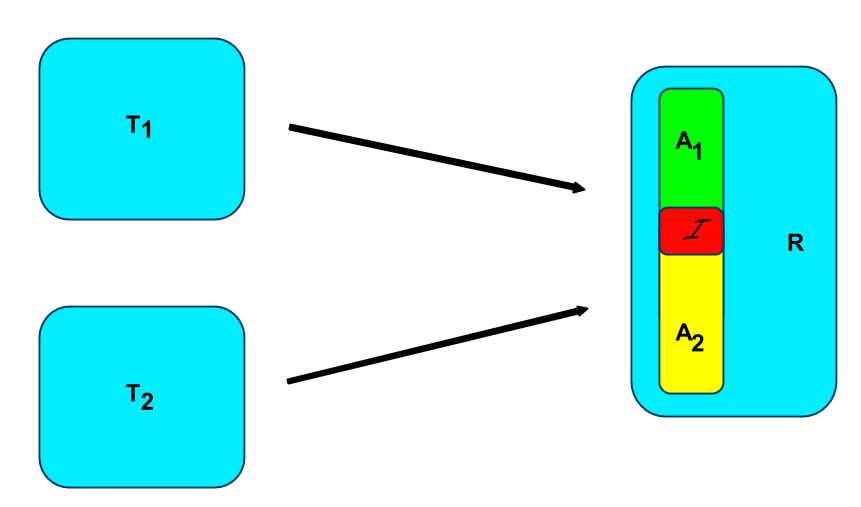}\vspace{-2mm}
\caption{The intersection of signal spaces at the receiver}
  \label{sys}
  \end{center} \vspace{-6mm}
  \end{figure}
    \vspace{1.5mm}
    
Let $A_1$ and $A_2$ span the received signal spaces of transmitter one and two at the receiver,
$\mathcal{I}$ is chosen to be the intersection of these two spaces, i.e.,
\begin{equation}\label{first}
\mathcal{I}= A_1\bigcap A_2.
\end{equation}
$\mathcal{I}$ would have positive size only if $M\geq N$~\cite{ours}.
Without loss of generality, we design $\vec{V}_1^J$ and $\vec{V}_2^J$ such that,
\begin{eqnarray}
\vec{H}_1\vec{V}^J_1= \vec{H}_2\vec{V}^J_2= \mathcal{I}\label{two}
\label{eqn:30}
\end{eqnarray}
\\
While the system of equations in (\eqref{two}) has more variables than the number of equations,~\eqref{first} ensures that the system has a unique solution as $\mathcal{I}$ lies in the spans of $\vec{H}_1$ and $\vec{H}_2$.

Let
\begin{equation}
 \vec{H}_i=
\begin{bmatrix}
 \vec{H}_i^{'}\\
 \vec{H}_i^{''}
\end{bmatrix}
\;\; \forall \; i=1,2,
\end{equation}
where $\vec{H}_i'$ contains the  $M_i$ rows of $\vec{H}_i$ and $\vec{H}_i''$ contains the other $N-M_i$ rows.

Let
\begin{equation}
 \mathcal{I}=
\begin{bmatrix}
 \mathcal{I}_i' \\
\mathcal{I}_i''
\end{bmatrix}
\;\; \forall \; i=1,2,
\end{equation}

where $\mathcal{I}_i'$ contains the  $M_i$ rows of $\mathcal{I}$ and $\mathcal{I}_i''$ contains the other $N-M_i$ rows.

Therefore, we can choose the following design which satisfies ~\eqref{eqn:30}
\begin{equation}\label{last}
\vec{V}^J_1 =(\vec{H}_1')^{-1}\mathcal{I}_1'
\end{equation}
\begin{equation}
\vec{V}^J_2 =(\vec{H}_2')^{-1}\mathcal{I}_2'
\end{equation}
\\
For the legitimate receiver to remove the jamming signal and decode the legitimate message, it zero forces the jamming signal using the post-processing matrix $\vec{U}$ as in (\ref{zero}).\\

For the case $N_E$ is odd, each transmitter will align its jamming signal into an $\lfloor \frac{N_E}{2} \rfloor$--dimensional half space using linear alignment.  The remaining $1$dimensional space will be equally shared between the two transmitters' jamming signal using real interference alignment~\cite{sennur_helpers}, yielding each transmitter's jamming signal to occupy $\frac{N_E}{2}$ .

\subsubsection*{Nullspace jamming}
In nullspace jamming method, transmitter one sends a jamming signal of $J_1$ dimensions using the precoder $\vec{V}^J_{1}$ which lies in the nullspace of the channel $\vec{H}_1$, while transmitter two sends a jamming signal of $J_2$ dimensions using the precoder $\vec{V}^J_2$ which lies in the nullspace of the channel $\vec{H}_2$.
\vspace{-2mm}
\begin{eqnarray}
\vec{V}^J_i=\text{\textit{Null}} (\vec{H}_i) i\in 1,2
\end{eqnarray}
\noindent This blocks $J_1+J_2$ dimensions at the eavesdropper and leaves $N$ free dimensions at the legitimate receiver to attain the legitimate signal in addition to other jamming if needed.\\

\noindent\emph{Case $M_1<N \hspace{1mm} \text{and} \hspace{1mm} N_E \geq 2(M-N)$:}

In this region, aligned jamming and random jamming are used. The first transmitter signal is divided into two parts of sizes $J_1$ and $J_2$ , The first part uses aligned jamming precoder with $J_1=\text{min}({\frac{N_E}{2}, M-N})$, or $J_1=(M-N)$, while the second part uses random jamming precoder with $J_2=[N_E-2J_1]^+$. The second transmitter jamming signal size is $J_1$ and uses aligned jamming precoder. This scheme wastes $2J_1+J_2=N_E$ dimensions of the transmitters signal space for jamming, so they can transmit at $M-N_E$ DoF to the receiver. On the other hand, the jamming occupies $J_s=J_1+J_2$ dimensions at the receiver. 

\begin{lemma} For the scheme proposed in the case $M_1<N \hspace{1mm} \text{and} \hspace{1mm} N_E \geq 2(M-N)$, a sufficient condition for the receiver to decode the transmitted $M-N_E$ DoF is given by $N-J_s \geq M-N_E$.
\end{lemma}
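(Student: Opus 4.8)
The plan is to prove the claim by counting the effective dimensions available at the receiver after the jamming is zero-forced, and then verifying that the legitimate streams remain linearly independent in the surviving subspace for almost every channel realization. The guiding intuition is that alignment lets the two transmitters spend $2J_1$ transmit dimensions on jamming while consuming only $J_1$ received dimensions, so the receiver is left with $N - J_s$ clean dimensions into which the $M - N_E$ legitimate streams must fit, and the stated inequality is exactly what guarantees this fit.

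First I would determine the dimension the jamming occupies at the receiver. By the alignment design in \eqref{two}, the $J_1$ aligned jamming directions of transmitter one and the $J_1$ aligned jamming directions of transmitter two are both steered into the common intersection subspace $\mathcal{I} = A_1 \cap A_2$ of \eqref{first}; consequently they overlap and span only $J_1$ received dimensions instead of $2J_1$. The random-jamming block of transmitter one occupies $J_2$ further directions, which are almost surely independent of $\mathcal{I}$ because the channel is drawn from a continuous distribution. Hence the received jamming subspace $\mathcal{J}$ has dimension $J_s = J_1 + J_2$. Substituting the scheme's parameters $J_1 = M-N$ and $J_2 = N_E - 2(M-N)$ gives $J_s = N + N_E - M$, so that $N - J_s = M - N_E$ and the stated inequality holds with equality in this case.

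Next I would form the zero-forcing post-processor $\vec{U}$, generalizing \eqref{zero} to the orthogonal projection onto the complement of $\mathcal{J}$; it has rank $N - J_s$ and nulls every jamming component, leaving the receiver with $\vec{U}(\vec{H}_1 \vec{V}_1^L \vec{s}_1 + \vec{H}_2 \vec{V}_2^L \vec{s}_2) + \vec{U}\vec{n}$. Writing the effective legitimate channel as the $N \times (M-N_E)$ matrix $\vec{H}^L = [\vec{H}_1 \vec{V}_1^L \;\; \vec{H}_2 \vec{V}_2^L]$, reliable recovery of all $M - N_E$ streams is equivalent to $\vec{U}\vec{H}^L$ having full column rank $M - N_E$.

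The crux, and the step I expect to be the main obstacle, is establishing $\mathrm{rank}(\vec{U}\vec{H}^L) = M - N_E$ under the condition $N - J_s \geq M - N_E$. Since $\vec{U}$ projects onto an $(N - J_s)$-dimensional space, the rank is bounded above by $\min(N-J_s,\, M-N_E) = M - N_E$; the real work is to show this upper bound is attained, i.e. that no legitimate direction lies in the jamming subspace $\mathcal{J}$. I would argue this generically: the entries of $\vec{H}_1, \vec{H}_2$ are continuous, the jamming precoders $\vec{V}_i^J$ (hence $\mathcal{J}$) and the complementary legitimate precoders $\vec{V}_i^L$ are algebraic functions of those entries, and the event that the columns of $[\,\vec{H}^L \;\; \mathcal{J}\,]$ are linearly dependent is the vanishing of a nonzero polynomial (a suitable minor) in the channel entries, hence has measure zero. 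The subtlety is the coupling between $\vec{V}_i^L$ and $\vec{V}_i^J$ through the unitarity of $\vec{V}_i$ and their shared dependence on $\vec{H}_i$; I would resolve it by exhibiting a single channel realization for which the relevant minor is nonzero, which certifies that the polynomial is not identically zero and therefore nonzero almost everywhere. A standard zero-forcing and typical-set decoding argument on the resulting full-rank system then delivers the $M - N_E$ DoF, completing the proof.
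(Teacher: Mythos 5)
Your proposal is correct, and its computational core is exactly the paper's entire proof: the paper does nothing more than substitute $J_1=M-N$ (forced by $N_E\geq 2(M-N)$, since then $\min(\frac{N_E}{2},M-N)=M-N$) and $J_2=N_E-2J_1$ to get $J_s=J_1+J_2=N_E-M+N$ and hence $N-J_s=M-N_E$, concluding "Thus, $M-N_E$ is achievable" --- which is precisely your second paragraph. Where you genuinely diverge is that you also prove the \emph{sufficiency} half of the statement, which the paper silently assumes: you reduce decodability of the $M-N_E$ legitimate streams to the projected effective channel $\vec{U}\,[\vec{H}_1\vec{V}_1^L \;\; \vec{H}_2\vec{V}_2^L]$ having full column rank $M-N_E$ after zero-forcing the $J_s$-dimensional jamming subspace, and you argue this generically via a nonvanishing-polynomial (minor) and measure-zero argument, correctly flagging as the delicate point the coupling of $\vec{V}_i^L$ and $\vec{V}_i^J$ through the unitarity of $\vec{V}_i$ and their common dependence on $\vec{H}_i$. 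The paper never touches this step; its proof justifies only that the counting condition holds with equality, not that the condition implies decoding. So your route buys an actual justification for the word "sufficient" in the lemma, at the cost of one promissory note: you say you \emph{would} exhibit a witness channel realization making the relevant minor nonzero, but you do not construct it, so that step remains a sketch. A second small point worth making explicit: your claim that the $J_2$ random-jamming directions are almost surely independent of $\mathcal{I}$ inside the column space of $\vec{H}_1$ requires $J_1+J_2\leq M_1$, which is the same feasibility condition ($d_1\geq 0$ in transmitter one's antenna budget) that the scheme itself implicitly needs; stating it would close the loop. Neither omission breaks the argument, and your version is strictly more complete than the paper's.
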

\vspace{-2mm}
\begin{proof}
\begin{eqnarray}
\nonumber J_s&=& J_1+J_2\\
\nonumber J_s&=& N_E-J_1\\
\nonumber J_s&=& N_E-M+N\\
\nonumber N-J_s&=& N-N_E+M-N\\
\nonumber N-J_s&=&M-N_E.
\end{eqnarray}
Thus, $M-N_E$ is achievable.
\end{proof}

\noindent\emph{Case $M_1<N \text{ and } N_E < 2(M-N)$:}

In this region, aligned jamming is used alone with both transmitters jamming signals sizes set to $J_1=\frac{N_E}{2}$. Consequently, the receiver loses $J_s=\frac{N_E}{2}$ dimensions because of the jamming.

The achievable SDoF for this region is
\begin{equation}
\nonumber d_1+d_2=\text{min}\bigg(N-\frac{N_E}{2}, M-N_E\bigg)\\
\end{equation}
Considering that in this case $M_1<N \text{ and } N_E < 2(M-N)$, we get
\begin{equation}
\nonumber d_1+d_2=N-\frac{N_E}{2},
\end{equation}
which can be rewritten as
\begin{equation}
d_1+d_2=\frac{\text{max}(M_1,N)+\text{max}(M_2,N)-N_E}{2}\\
\end{equation}

\noindent\emph{Case $M_1>N \hspace{1mm} \text{ and } \hspace{1mm}N_E< M_1-N+[M_2-N]^+ $:}

In this region, nullspace jamming is used alone. Transmitter one sends $J_1=\text{min}(N_E, M_1-N)$ dimensional jamming signal and transmitter two sends $J_2=N_E-J_1$ dimensional jamming signal in the null spaces of the legitimate receiver channels.
This leaves the receiver with $N$ jamming free dimensions to decode the $N$ SDoF transmitted.  Consequently, the upperbound $N$ is achieved.\\

\noindent\emph{Case $M_1>N, M_2<N \text{ and } N_E \geq [M_1-N+2M_2] $:}

In this region, all three jamming methods are used, the first transmitter jamming signal is divided into three parts of sizes $J_1=[M_1-N]^+$, $J_2=\min(M_2, \frac{N_E-J_1}{2})$, which is equivalent to $J_2= M_2$, and $J_3=N_E-(J_1+J_2)$. While the second transmitter jamming signal size is $J_2$. The first transmitter uses nullspace jamming, aligned jamming and random jamming for its three parts, respectively. The second transmitter uses aligned jamming only. This scheme uses $N_E$ dimensions of the transmitters signal space, so they can transmit at $M-N_E$ DoF to the receiver. On the other hand, the jamming occupies $J_s=J_2+J_3$ dimensions at the receiver. 
\begin{lemma}
Using the proposed achievable scheme under the case $M_1>N, M_2<N \text{ and } N_E \geq [M_1-N+2M_2] $, the condition $N-J_s\geq M-N_E$ is achieved, which implies that the legitimate signal and the jamming spaces are not overlapping.
\end{lemma}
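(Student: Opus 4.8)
The plan is to establish the inequality by directly evaluating the three dimension parameters $J_1, J_2, J_3$ under the case hypotheses and substituting them into $J_s = J_2 + J_3$. The substance of the argument is not algebraic but bookkeeping: one must correctly count how many dimensions each jamming component occupies at the legitimate receiver versus at the eavesdropper, and the hypothesis $N_E \geq M_1 - N + 2M_2$ is what makes the chosen allocation feasible.

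First I would resolve the $[\cdot]^+$ and $\min$ operators. Because the case assumes $M_1 > N$, we have $J_1 = [M_1 - N]^+ = M_1 - N$. For $J_2 = \min(M_2, \tfrac{N_E - J_1}{2})$, substituting $J_1 = M_1 - N$ and using the hypothesis $N_E \geq M_1 - N + 2M_2$, which rearranges to $\tfrac{N_E - (M_1 - N)}{2} \geq M_2$, shows that the minimum is attained by $M_2$, so $J_2 = M_2$, as asserted in the text. The same inequality guarantees that the random-jamming block $J_3$ has non-negative size, so the allocation is admissible.

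The key step is to fix $J_3$ from the requirement that the jamming saturate all $N_E$ of the eavesdropper's dimensions. The nullspace block and the random block, both from transmitter one, each span as many eavesdropper dimensions as their own size, namely $J_1$ and $J_3$; the aligned block, however, is transmitted by both users with $J_2$ streams each and is aligned only at the legitimate receiver, so generically it spans $2J_2$ independent dimensions at the eavesdropper. Blocking the full eavesdropper space therefore forces $J_1 + 2J_2 + J_3 = N_E$, i.e. $J_3 = N_E - J_1 - 2J_2$. I expect this alignment accounting to be the main obstacle: the aligned component consumes $2J_2$ transmit (and eavesdropper) dimensions but collapses to only $J_2$ dimensions at the legitimate receiver, and it is exactly this gap that must be tracked consistently between the transmit budget (which the scheme fixes to $N_E$, leaving $M - N_E$ for the data streams) and the receiver-side footprint $J_s$.

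Finally I would assemble the pieces. Since nullspace jamming lies in $\mathrm{Null}(\vec{H}_1)$ it is invisible at the receiver, so the receiver-side jamming footprint is $J_s = J_2 + J_3$. Substituting $J_3 = N_E - J_1 - 2J_2$ gives $J_s = N_E - J_1 - J_2$, and inserting $J_1 = M_1 - N$ and $J_2 = M_2$ yields $J_s = N_E - (M_1 - N) - M_2$, whence $N - J_s = M_1 + M_2 - N_E = M - N_E$. Thus $N - J_s \geq M - N_E$ holds, in fact with equality. Because the jamming occupies only $J_s$ receiver dimensions and the $M - N_E$ legitimate streams fit into the remaining $N - J_s = M - N_E$ dimensions, the legitimate signal can be confined to the orthogonal complement of the jamming subspace, so the two spaces do not overlap, which is the geometric statement of the lemma.
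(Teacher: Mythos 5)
Your proof is correct and takes essentially the same route as the paper's: resolve $J_1 = M_1-N$ and $J_2 = M_2$, set $J_3 = N_E - J_1 - 2J_2$, and substitute into $J_s = J_2 + J_3$ to obtain $N - J_s = M - N_E$ with equality. Your explicit bookkeeping $J_1 + 2J_2 + J_3 = N_E$ (counting the second transmitter's $J_2$ aligned streams) matches the value $J_3 = N_E - [(M_1-N)+2M_2]$ used in the paper's proof, and your added justifications --- that the case hypothesis resolves the $\min$ to $M_2$ and keeps $J_3 \geq 0$, and that the aligned block spans $2J_2$ dimensions at a generic eavesdropper --- merely make explicit what the paper asserts, so this is the same argument, not a different one.
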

\vspace{-7mm}
\begin{proof}
\begin{eqnarray}
\nonumber J_2 &=&M_2\\
\nonumber J_3 &=&N_E-[(M_1-N)+2M_2]\\
\nonumber J_s &=&N_E-[(M_1-N)+M_2]\\
\nonumber N-J_s &=&N-N_E+M_1-N+M_2\\
\nonumber N-J_s &=&M-N_E
\end{eqnarray}
\end{proof}
\vspace{-5mm}
\noindent\emph{Case $M_1>N,  M_2<N \text{ and } M_1-N\leq N_E< (M_1-N+2M_2)$:}

In this region, two jamming methods are used, the first transmitter jamming signal is divided into two parts of sizes $J_1=[M_1-N]^+$, $J_2=\text{min}(M_2, \frac{N_E-J_1}{2})$, while the second user jamming signal size is $J_2$. The first transmitter uses nullspace jamming for its first part and aligned jamming for the second part, which is aligned to the second transmitter's jamming signal at the receiver. The jamming occupies $J_s=J_2$ dimensions at the receiver. 
\begin{lemma}
For the proposed scheme under the case $M_1>N,  M_2<N \text{ and } M_1-N\leq N_E< (M_1-N+2M_2)$, the achievable secure degrees of freedom is given by 
\begin{equation}
d_1+d_2\leq\frac{\text{max}(M_1,N)+\text{max}(M_2,N)-N_E}{2}
\end{equation}
\end{lemma}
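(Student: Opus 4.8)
The plan is to establish achievability by bookkeeping the jamming footprint \emph{separately} at the legitimate receiver and at the eavesdropper, and then to show that, in this regime, the receiver's jamming-free dimensions are the binding constraint and equal the claimed expression. First I would pin down the parameters. Since we are in the regime $N_E < M_1-N+2M_2$ with $J_1=[M_1-N]^+=M_1-N$, the quantity $\tfrac{N_E-J_1}{2}$ is strictly smaller than $M_2$, so the minimum defining $J_2$ is attained by the second argument, giving $J_2=\tfrac{N_E-(M_1-N)}{2}$. Transmitter one then spends $J_1$ dimensions on nullspace jamming (in $\mathrm{Null}(\vec{H}_1)$) and $J_2$ on aligned jamming, while transmitter two spends $J_2$ on aligned jamming directed into the common subspace $\mathcal{I}$ of \eqref{first}.

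Next I would verify that the eavesdropper is completely saturated. The nullspace jamming of transmitter one passes through the independent channel $\vec{G}_1$ rather than $\vec{H}_1$, so generically it occupies $J_1$ dimensions at the eavesdropper; the two aligned jamming streams, which were aligned \emph{at the receiver} through $\vec{H}_1,\vec{H}_2$, are seen through the statistically independent $\vec{G}_1,\vec{G}_2$ and therefore do \emph{not} align at the eavesdropper, occupying $2J_2$ dimensions. The total is $J_1+2J_2=(M_1-N)+(N_E-M_1+N)=N_E$, so the jamming fills the entire $N_E$-dimensional eavesdropper space and, by the Proposition, every eavesdropper decodes zero DoF of the legitimate messages.

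Then I would count the clean dimensions at the receiver. The $J_1$ nullspace streams vanish after $\vec{H}_1$, and the two aligned jamming streams coincide in $\mathcal{I}$ by construction \eqref{two}, so the jamming footprint at the receiver is $J_s=J_2$; after applying the zero-forcing matrix $\vec{U}$ of \eqref{zero}, the receiver retains $N-J_s=N-J_2$ interference-free dimensions. Finally I would show this is the bottleneck rather than the transmit budget: the algebra shows that the case hypothesis $N_E<M_1-N+2M_2$ is exactly equivalent to $N-J_2<M-N_E$, so the receiver dimensions, not the post-jamming transmit DoF $M-N_E$, cap the sum. Hence $d_1+d_2\leq N-J_2=\tfrac{M_1+N-N_E}{2}$, and since $M_1>N$, $M_2<N$ give $\max(M_1,N)=M_1$ and $\max(M_2,N)=N$, this is precisely $\tfrac{\max(M_1,N)+\max(M_2,N)-N_E}{2}$, matching the converse of Theorem~2.

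The main obstacle is the simultaneous dimension accounting: one must confirm that the aligned jamming collapses to a single $J_2$-dimensional subspace at the receiver yet spreads to $2J_2$ dimensions at the eavesdropper, and that nullspace jamming vanishes at the receiver while remaining visible at the eavesdropper. Both facts follow from the alignment design \eqref{two}--\eqref{last} and the statistical independence of the $\vec{G}_{i}$ from the $\vec{H}_i$, so the remaining work is the routine identity $N-J_2=\tfrac{M_1+N-N_E}{2}$ and the equivalence of the case condition with $N-J_2<M-N_E$.
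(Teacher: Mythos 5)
Your proposal is correct and follows essentially the same route as the paper: compute $J_2=\tfrac{N_E-(M_1-N)}{2}$ from the case condition, observe $J_s=J_2$ at the receiver, bound $d_1+d_2\leq\min\left(M-N_E,\,N-J_s\right)=\tfrac{N+M_1-N_E}{2}$, and rewrite via $\max(M_1,N)=M_1$, $\max(M_2,N)=N$. You additionally make explicit two steps the paper leaves implicit --- the eavesdropper-side count $J_1+2J_2=N_E$ and the equivalence of the case hypothesis with $N-J_2<M-N_E$ (which justifies resolving the minimum) --- but these are elaborations of the same argument, not a different one.
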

\begin{proof}
\begin{eqnarray}
\nonumber J_s&=&  \frac{N_E-[M_1-N]}{2}\\
\nonumber N-J_s&=&  N-\frac{N_E-[M_1-N]}{2}\\
\nonumber N-J_s&=& \frac{2N-N_E+[M_1-N]}{2}\\
\nonumber N-J_s&=& \frac{N-N_E+M_1}{2} 
\end{eqnarray}

Thus, 
\begin{eqnarray}
\nonumber d_1+d_2 \leq \text{min}(M-N_E, \frac{N-N_E+M_1}{2})\\ 
\nonumber d_1+d_2 \leq  \frac{N-N_E+M_1}{2},
\end{eqnarray}
which can be rewritten as, 
\begin{equation}
\nonumber d_1+d_2 \leq \frac{\text{max}(M_1,N)+\text{max}(M_2,N)-N_E}{2}.
\end{equation}
\end{proof}

\noindent\emph{Case $M_1>N,  M_2\geq N \text{ and } N_E \geq M-2N$:}
\vspace{3mm}\\
In this region, two jamming methods are used. Each transmitter jamming signal is divided into two parts of sizes 
\begin{eqnarray}
\nonumber J_{1,i}=M_i-N, \;\;\;\;\ \forall \; i\in\{1,2\}\\
\nonumber J_2= \frac{N_E-\sum_{i\in\{1,2\}}J_{1,i}}{2}.
\end{eqnarray}
Both transmitters use nullspace jamming for their first part and aligned jamming for the second part. The jamming occupies $J_s=J_2$ dimensions at the receiver.

\begin{lemma}
For the proposed scheme under case $M_1>N,  M_2\geq N \text{ and } N_E \geq M-2N$, the achievable secure degrees of freedom is
\begin{equation}
d_1+d_2 \leq \frac{\text{max}(M_1,N)+\text{max}(M_2,N)-N_E}{2}
\end{equation}
\end{lemma}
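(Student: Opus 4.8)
The plan is to follow the same two-line bookkeeping used in the preceding case-lemmas: determine how many receiver dimensions the jamming consumes, express the achievable sum DoF as the minimum of the transmit-side signal budget and the jamming-free receiver space, and then simplify using the hypotheses $M_1>N$ and $M_2\geq N$. Throughout I write $M=M_1+M_2$.

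First I would substitute the case definitions of the jamming sizes. Since $J_{1,i}=M_i-N$, we have $\sum_{i\in\{1,2\}}J_{1,i}=M-2N$, and therefore
\begin{equation}
\nonumber J_2=\frac{N_E-(M-2N)}{2}.
\end{equation}
The hypothesis $N_E\geq M-2N$ is precisely what makes $J_2\geq 0$, so the scheme is well defined; moreover $M_1>N$ and $M_2\geq N$ make each $\vec{H}_i$ generically full row rank, so $\mathrm{Null}(\vec{H}_i)$ has dimension $M_i-N$ and the nullspace-jamming part of size $J_{1,i}$ is feasible.

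Next I would tally the receiver footprint. The nullspace-jamming components lie in $\mathrm{Null}(\vec{H}_1)$ and $\mathrm{Null}(\vec{H}_2)$ and hence contribute nothing at the legitimate receiver, while the two aligned-jamming components are steered into a single common $J_2$-dimensional subspace, so together they occupy only $J_2$ receiver dimensions. Thus $J_s=J_2$ and
\begin{equation}
\nonumber N-J_s=N-\frac{N_E-(M-2N)}{2}=\frac{M-N_E}{2}.
\end{equation}
On the transmit side the total jamming dimensions are $J_{1,1}+J_{1,2}+2J_2=N_E$, which both confirms (via the Proposition) that all $N_E$ eavesdropper dimensions are overwhelmed and leaves the legitimate streams with $M-N_E$ transmit dimensions.

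Finally, the decodable sum DoF is the smaller of the transmit budget and the jamming-free receiver space, so
\begin{equation}
\nonumber d_1+d_2\leq\min\left(M-N_E,\ \tfrac{M-N_E}{2}\right)=\frac{M-N_E}{2},
\end{equation}
where the last equality uses $N_E<M$ from the system model. Substituting $\max(M_1,N)=M_1$ and $\max(M_2,N)=M_2$ then yields the claimed bound. The one step I would treat carefully is the assertion that the two aligned-jamming signals collapse to a single $J_2$-dimensional footprint at the receiver: because $M_1>N$ and $M_2\geq N$ both channels are surjective onto $\mathbb{C}^{N}$, each transmitter can independently steer its aligned jamming onto any prescribed $J_2$-dimensional subspace (this needs $J_2\leq N$, which follows from $N_E\leq M$), and choosing the same subspace forces the two contributions to overlap; the remaining relations are purely arithmetic.
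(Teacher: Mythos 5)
Your proposal is correct and follows essentially the same route as the paper's proof: substitute $J_{1,i}=M_i-N$ to get $J_s=J_2=\frac{N_E-(M-2N)}{2}$, compute $N-J_s=\frac{M-N_E}{2}$, and rewrite via $\max(M_1,N)=M_1$, $\max(M_2,N)=M_2$. Your additions --- checking $J_2\geq 0$ and $J_2\leq N$, confirming the transmit-side tally $J_{1,1}+J_{1,2}+2J_2=N_E$, and making the $\min\left(M-N_E,\tfrac{M-N_E}{2}\right)$ step explicit --- are sound refinements of bookkeeping the paper leaves implicit, not a different argument.
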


\begin{proof}
\begin{eqnarray}
\nonumber J_s &=& \frac{N_E-(M_1-N+M_2-N)}{2}\\
\nonumber N-J_s &=&  N- \frac{N_E-[M-2N]}{2}\\
\nonumber N-J_s &=&\frac{2N-N_E+M-2N}{2}\\
\nonumber N-J_s &=&   \frac{M-N_E}{2}.
\end{eqnarray}
Thus, 
\begin{eqnarray}
d_1+d_2 \leq  \frac{M-N_E}{2},
\end{eqnarray}
which can be rewritten as
\begin{equation}
\nonumber d_1+d_2 \leq \frac{\text{max}(M_1,N)+\text{max}(M_2,N)-N_E}{2}
\end{equation}
\end{proof}

\section{Conclusion}
\label{sec:conclusion}
We studied the two-transmitter Gaussian multiple access wiretap channel with multiple antennas at the transmitters, legitimate receivers and eavesdroppers. Generalizing new upperbound was established and  a new achievable scheme was provided. We used the new optimal scheme to derive the sum secure DoF of the channel. We showed that the our scheme meets the upperbound or all $M_1, M_2, N_E$ combinations. We showed that Cooperative Jamming is SDoF optimal even without the eavesdropper CSI available at the transmitters by showing that jamming signal independent of the eavesdropper channel and only depends on the signal transmitted power make the eavesdropper decoded DoF. Finally we showed that if any eavesdropper has more antennas that the sum of the transmitting antennas or the receiving antennas the SDoF is zero.

\end{document}